\newcommand{\diag}{\mathop{\rm diag}}
\newcommand{\blkdiag}{\mathop{\rm blkdiag}}
\newcommand{\tr}{\mathop{\rm tr}}
\newcommand{\argmin}{\mathop{\rm argmin}}
\newcommand{\norm}[1]{\left\lVert#1\right\rVert}
\newcommand{\mnorm}[1]{{\left\vert\kern-0.25ex\left\vert\kern-0.25ex\left\vert #1 
    \right\vert\kern-0.25ex\right\vert\kern-0.25ex\right\vert}}
\newtheorem{definition}{Definition} 
\newtheorem{lemma}{Lemma}
\newtheorem{remark}{Remark}
\newtheorem{proposition}{Proposition}
\newtheorem{assumption}{Assumption}
\newcommand{\ie}{{\it i.e.}}
\definecolor{new}{rgb}{0.35, 0, 0.4}
\definecolor{car1}{rgb}{0, 0.28, 0.67}
\definecolor{car2}{rgb}{0, 0.42, 0.24}
\definecolor{dgreen}{rgb}{0, 0.5, 0}
\definecolor{blue}{rgb}{0.1, 0.1, 1}
\begin{document}

\title{Inverse Matrix Games with Unique Quantal Response Equilibrium}

\author{Yue~Yu, Jonathan~Salfity, David~Fridovich-Keil, and Ufuk~Topcu % <-this % stops a space
%\thanks{*This work was not supported by any organization}% <-this % stops a space
\thanks{Y. Yu and U. Topcu are with the Oden Institute for Computational Engineering and Sciences, The University of Texas at Austin, TX, 78712, USA (emails:  yueyu@utexas.edu,\,utopcu@utexas.edu). J. Salfity is with the Department of Mechanical Engineering, The University of Texas at Austin, TX, 78712, USA (email:  j.salfity@utexas.edu). D. Fridovich-Keil is with the Department of Aerospace Engineering and Engineering Mechanics, The University of Texas at Austin, TX, 78712, USA (email: dfk@utexas.edu). 
%The work of Y. Yu and U. Topcu is jointly supported by the National Science Foundation under Grant No. 1652113 and No. 2211432. The work of J. Salfity is supported by The University of Texas at Austin under the Creating Connections Grant and the Cockrell School of Engineering Doctoral Fellowship. The work of D. Fridovich-Keil is supported by the National Science Foundation under Grant No. 2211548.}%
}
}

\maketitle
\thispagestyle{empty}
\pagestyle{empty}

%%%%%%%%%%%%%%%%%%%%%%%%%%%%%%%%%%%%%%%%%%%%%%%%%%%%%%%%%%%%%%%%%%%%%%%%%%%%%%%%
\begin{abstract}
In an inverse game problem, one needs to infer the cost function of the players in a game such that a desired joint strategy is a Nash equilibrium. We study the inverse game problem for a class of multiplayer matrix games, where the cost perceived by each player is corrupted by random noise. We provide sufficient conditions for the players' quantal response equilibrium---a generalization of the Nash equilibrium to games with perception noise---to be unique. We develop efficient optimization algorithms for inferring the cost matrix based on semidefinite programs and bilevel optimization. We demonstrate the application of these methods in encouraging collision avoidance and fair resource allocation.
\end{abstract}

\begin{IEEEkeywords}
Game theory, optimization
\end{IEEEkeywords}

%%%%%%%%%%%%%%%%%%%%%%%%%%%%%%%%%%%%%%%%%%%%%%%%%%%%%%%%%%%%%%%%%%%%%%%%%%%%%%%%
\section{Introduction}
\label{sec: introduction}

In a multiplayer game, each player tries to find the strategies with the minimum cost, where the cost of each strategy depends on the other players' strategies. The Nash equilibrium is a set of strategies where no player can benefit from unilaterally changing strategies. The Nash equilibrium generalizes minimax equilibrium in two-player zero-sum games \cite{morgenstern1953theory} to multiplayer general-sum games \cite{nash1950equilibrium,nash1951non}. 

Given a joint strategy of the players in a game, the inverse game problem requires inferring the cost function such that the given joint strategy is indeed a Nash equilibrium. The inferred cost function can either rationalize observed player behavior \cite{molloy2022inverse,bertsimas2015data,ling2018game} or provide incentives to encourage desired behavior \cite{bacsar1984affine,nisan2015algorithmic}. There have been many results on inverse games in different contexts, including specific games, such as matching \cite{kalyanaraman2008complexity}, network formation \cite{kalyanaraman2009complexity}, and auction \cite{nekipelov2015econometrics}; and generic classes of games, such as succinct games \cite{kuleshov2015inverse}, dynamic games \cite{molloy2022inverse},  and general convex games whose Nash equilibria are characterized by variational inequalities \cite{bertsimas2015data}.

The existing results on inverse games have the following limitations. First, the existing results do not guarantee a unique Nash equilibrium. Such nonuniqueness makes the players' behavior less predictable since there is ambiguity in which equilibrium the players will choose. It also complicates the players' decision-making, since the players need to align their choice of equilibrium with the other players' \cite{peters2020inference}; see Fig.~\ref{fig: nonunique} for an illustrative example. Second, the existing results assume each player has perfect perceptions of the cost of each action. Such an assumption is not reasonable to model human behavior where the players have bounded rationality and imperfect cost estimation \cite{mckelvey1995quantal,mckelvey1998quantal}. Recent work addressed these limitations in the context of two-player zero-sum games \cite{ling2018game}. However, for multi-player general-sum games, treatments to these limitations are, to our best knowledge, still missing. 

\begin{figure}[!t]
		\centering
		\begin{tikzpicture}[scale=0.7]
		\coordinate (O) at (0, 0);
		\coordinate (A) at ($(O)+(-3, 0)$);
		\coordinate (B) at ($(O)+(3, 0)$);
		
 		\coordinate (C) at ($(A)+(0.8, -0.8)$);
 		\coordinate (D) at ($(B)+(-0.8, -0.8)$);
		
		\node [sedan top,body color=car1!60,window color=black!80,minimum width=1.5cm] at (A) {};
		\node [sedan top,body color=car2!60,window color=black!80,minimum width=1.5cm,rotate=180] at (B) {};
		
		\draw[car2, -latex, very thick] ($(O)+(10:1.5)$) arc (10:85:1.5) node[midway, above, sloped] {\color{car2} \scriptsize strategy 1};
		\draw[car2, dotted, -latex, very thick] ($(O)+(-10:1.5)$) arc (-10:-85:1.5) node[midway, below, sloped] {\color{car2} \scriptsize strategy 2};
		
		\draw[car1, dotted, -latex, very thick] ($(O)+(170:1.5)$) arc (170:95:1.5) node[midway, above, sloped] {\color{car1} \scriptsize strategy 2};
		\draw[car1, -latex, very thick] ($(O)+(190:1.5)$) arc (190:265:1.5) node[midway, below, sloped] {\color{car1} \scriptsize strategy 1};
		
 		\node[label={left:{\color{car1} \scriptsize player 1}}] at (C) {};
 		\node[label={right:{\color{car2} \scriptsize player 2}}] at (D) {};
		
\end{tikzpicture}
		\caption{A two-player game with two Nash equilibrium: joint strategy \((1, 1)\) and \((2, 2)\). The players must align their choice of equilibrium to avoid collisions.}\label{fig: nonunique}
\end{figure}
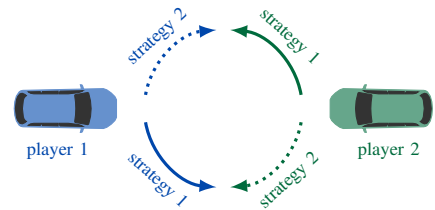

We study the inverse game problem for a class of multi-player general-sum matrix games, where each player's strategy is a probability distribution over a finite number of discrete actions, the cost of a strategy is characterized by a matrix, and the cost perceived by each player is corrupted by random noise. Our contributions are as follows.

First, we provide sufficient conditions on the cost matrix for the uniqueness of the quantal response equilibrium---which is a generalization of the Nash equilibrium when the cost perceived by each player is corrupted by noise \cite{mckelvey1995quantal,mckelvey1998quantal}---and show that one can efficiently compute this unique equilibrium by solving a nonlinear least-squares problem. Second, we develop two numerical methods–one based on semidefinite programs, the other based on bilevel optimization–that infer the cost matrices which optimize the unique quantal response equilibrium with respect to a performance function. The latter extends the implicit differentiation methods in \cite{ling2018game} from convex-concave saddle point problems to nonconvex equilibrium problems. Finally, we demonstrate the application of these methods in encouraging collision avoidance and fair resource allocation.

Our results are particularly useful for designing incentives that motivate desired behavior when the players' cost function is (partially) known. They also have potential applications in general inverse multiplayer games.

\paragraph*{Notation} We let \(\mathbb{R}\), \(\mathbb{R}_+\), \(\mathbb{R}_{++}\), and \(\mathbb{N}\) denote the set of real, nonnegative real, positive real, and positive integer numbers, respectively. Given \(m, n\in\mathbb{N}\), we let \(\mathbb{R}^n\) and \(\mathbb{R}^{m\times n}\) denote the set of \(n\)-dimensional real vectors and \(m\times n\) real matrices; we let \(\mathbf{1}_n\) and \(I_n\) denote the \(n\)-dimensional vector of all 1's and the \(n\times n\) identity matrix, respectively.  Given positive integer \(n\in\mathbb{N}\), we let \([n]\coloneqq \{1, 2, \ldots, n\}\) denote the set of positive integers less or equal to \(n\). Given \(x\in\mathbb{R}^n\) and \(k\in[n]\), we let \([x]_k\) denote the \(k\)-the element of vector \(x\), and \(\norm{x}\) denote the \(\ell_2\)-norm of \(x\).  Given a square real matrix \(A\in\mathbb{R}^{n\times n}\), we let \(A^\top\), \(A^{-1}\), and \(A^{-\top}\) denote the transpose, the inverse, and the transpose of the inverse of matrix \(A\), respectively; we say \(A\succeq 0\) and \(A\succ 0\) if \(A\) is symmetric positive semidefinite and symmetric positive definite, respectively; we let \(\norm{A}_F\) denote the Frobenius norm of matrix \(A\). We let \(\blkdiag(A_1, \ldots, A_k)\) denote the block diagonal matrix whose diagonal blocks are \(A_1, \ldots, A_k\in\mathbb{R}^{m\times m}\). Given continuously differentiable functions \(f:\mathbb{R}^n\to\mathbb{R}\) and \(G:\mathbb{R}^n\to\mathbb{R}^m\), we let \(\nabla_x f(x)\in\mathbb{R}^n\) denote the gradient of \(f\) evaluated at \(x\in\mathbb{R}^n\); the \(k\)-th element of \(\nabla_x f(x)\) is \(\frac{\partial f(x)}{\partial [x]_k}\). Furthermore,  we let \(\partial_x G(x)\in\mathbb{R}^{m\times n}\) denote the Jacobian of function \(G\) evaluated at \(x\in\mathbb{R}^n\); the \(ij\)-th element of \(\partial_x G(x)\) is \(\frac{\partial [G(x)]_i}{\partial [x]_j}\). 
\section{Quantal response equilibrium in matrix games}
\label{sec: entropy matrix games}

We introduce our theoretical model, a multiplayer matrix game where the cost perceived by each player is corrupted by stochastic noise.

\subsection{Multiplayer matrix games with perception noise}
\label{subsec: matrix game}

We consider a game with \(n\in\mathbb{N}\) players. Each player \(i\in[n]\) has \(m_i\in\mathbb{N}\) actions. We let \(m\coloneqq \sum_{i=1}^n m_i\) denote the total number of actions of all players. Player \(i\)'s strategy is an \(m_i\)-dimensional probability distribution over all possible actions, denoted by \(x_i\in\Delta_i\), where
\begin{equation}
    \Delta_i \coloneqq \{y\in\mathbb{R}^{m_i}| y^\top \mathbf{1}_{m_i}=1, y\geq 0\}.
\end{equation}

Each player's optimal strategy is one that minimizes the perceived cost, which is jointly determined by the strategies of all players and a stochastic perception error. In particular, let
\(b_i\in\mathbb{R}^{m_i}\) and \(C_{ij}\in\mathbb{R}^{m_i\times m_j}\) with \(C_{ii}=C_{ii}^\top\succeq 0\) for all \(i, j\in[n]\) be cost parameters. Then the cost of action \(k\) perceived by player \(i\) is given by
\begin{equation}
    [\textstyle b_i+\sum_{j=1}^n C_{ij}x_j]_k+\xi_{ik},
\end{equation}
where \(\xi_{ik}\) is a random variable that captures perception error in player \(i\)'s decision-making. 

If each \(\xi_{ik}\) is independently sampled from the Gumbel distribution with mean \(\gamma\lambda\) for some \(\lambda\in\mathbb{R}_{++}\) where \(\gamma\) is Euler's constant, then the optimal strategy for player \(i\) is
\begin{equation}\label{eqn: softmax}
\begin{aligned}
    \textstyle x_i&=\textstyle f_i\big(-\frac{1}{\lambda}(b_i+\sum_{j=1}^n C_{ij}x_j)\big)
\end{aligned}
\end{equation}
where \(f_i(z)\coloneqq\frac{1}{\mathbf{1}_{m_i}^\top \exp(z)}\exp(z)\) for all \(z\in\mathbb{R}^{m_i}\), and \(\exp(z)\in\mathbb{R}_{++}^{m_i}\) is the elementwise exponential of vector \(z\). The strategy in \eqref{eqn: softmax}, known as the \emph{logit quantal response}, models the bounded rationality in decision-making, and has been effective in consumer choice problems \cite{mcfadden1973conditional}; see \cite{mckelvey1995quantal,mckelvey1998quantal} for a detailed discussion. We define the concept of \emph{logit quantal response equilibrium} formally as follows.

\begin{definition}\label{def: entropy Nash}
A joint strategy \(x\coloneqq \begin{bmatrix}
x_1^\top & x_2^\top & \ldots & x_n^\top
\end{bmatrix}^\top\) is a \emph{quantal response equilibrium} if \eqref{eqn: softmax} holds for all \(i\in[n]\).
\end{definition}

The following lemma provides an optimization-based characterization of the equilibrium in Definition~\ref{def: entropy Nash}. 

\begin{lemma}\label{lem: softmax}
If \(\lambda>0\) and \(C_{ii}=C_{ii}^\top\succeq 0\) for all \(i\in[n]\), then \eqref{eqn: softmax} holds if and only if 
\begin{equation}\label{opt: entropy}
x_i\in\underset{y\in\Delta_i}{\argmin} \enskip \textstyle \big(b_i+\frac{1}{2}C_{ii} y+\sum_{j\neq i} C_{ij}x_j\big)^\top y+\lambda y^\top \ln(y),
\end{equation}
where \(\ln(y)\) denotes the elementwise logarithm of \(y\).
\end{lemma}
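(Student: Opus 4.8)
The plan is to show that the minimization in \eqref{opt: entropy} is a strictly convex program and then to identify its first-order optimality condition with the fixed-point equation \eqref{eqn: softmax}. First I would introduce the shorthand \(d_i \coloneqq b_i + \sum_{j\neq i} C_{ij} x_j\), treating the other players' strategies as fixed parameters, and use the symmetry \(C_{ii}=C_{ii}^\top\) to rewrite the objective of \eqref{opt: entropy} as \(g(y) = d_i^\top y + \frac{1}{2} y^\top C_{ii} y + \lambda y^\top \ln(y)\). Since \(C_{ii}\succeq 0\) makes the quadratic term convex and \(\lambda>0\) makes the negative-entropy term \(\lambda y^\top \ln(y)\) strictly convex on \(\Delta_i\), the objective \(g\) is strictly convex over the convex set \(\Delta_i\); hence the minimizer is unique and the stationarity (KKT) condition is both necessary and sufficient for global optimality. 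This reduces the lemma to matching one equation against another.

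Next I would write out that optimality condition. Because \(\nabla_y(y^\top \ln y) = \ln(y) + \mathbf{1}_{m_i}\) blows up at the boundary of the simplex, the minimizer lies in the relative interior, so the nonnegativity multipliers vanish and the condition reduces to \(\nabla g(y) = \nu \mathbf{1}_{m_i}\) for some scalar \(\nu\), together with \(y^\top \mathbf{1}_{m_i}=1\). Computing \(\nabla g(y) = d_i + C_{ii} y + \lambda(\ln(y)+\mathbf{1}_{m_i})\), I would solve for \(\ln(y)\), exponentiate, and absorb the constant \(\nu\) into the normalization dictated by \(y^\top \mathbf{1}_{m_i}=1\), which yields the implicit identity \(y = f_i\big(-\tfrac{1}{\lambda}(d_i + C_{ii} y)\big)\). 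Evaluating at \(y = x_i\) and using \(d_i + C_{ii} x_i = b_i + \sum_{j=1}^n C_{ij} x_j\) turns this into exactly \eqref{eqn: softmax}. Both directions then follow: if \(x_i\) solves \eqref{opt: entropy} it satisfies stationarity and hence \eqref{eqn: softmax}; conversely, if \eqref{eqn: softmax} holds then \(x_i\) satisfies the sufficient stationarity condition of the strictly convex problem and is therefore its unique minimizer.

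The step I expect to be the crux is recognizing why the factor \(\tfrac12\) multiplying \(C_{ii}\) in \eqref{opt: entropy} is exactly right. Differentiating \(\tfrac{1}{2} y^\top C_{ii} y\) produces \(C_{ii} y\) rather than \(2C_{ii}y\), so evaluating the stationarity condition at the fixed point \(y = x_i\) regenerates the \emph{full} perceived cost \(b_i + \sum_{j=1}^n C_{ij} x_j\), including the self-interaction term \(C_{ii} x_i\) that appears in \eqref{eqn: softmax}. This is the only nonobvious structural feature, and it is worth stressing that \eqref{opt: entropy} is not a plain entropy-regularized linear program: the effective cost vector \(d_i + C_{ii} y\) depends on the decision variable \(y\), so the optimality condition is genuinely a fixed-point equation rather than the closed-form softmax of a fixed cost vector. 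When invoking the standard entropy-minimization argument I would also take care to note that \(\ln\) acts elementwise and that the interior-minimizer property lets me discard the inequality constraints cleanly.
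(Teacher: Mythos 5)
Your proposal is correct and follows essentially the same route as the paper: both arguments use convexity of the entropy-regularized objective (from \(C_{ii}=C_{ii}^\top\succeq 0\) and \(\lambda>0\)) to make the KKT conditions necessary and sufficient, observe that the logarithm forces the minimizer into the relative interior of \(\Delta_i\) so the nonnegativity multipliers vanish, and then match the resulting stationarity condition with the softmax fixed-point equation \eqref{eqn: softmax}. Your writeup merely spells out the exponentiation/normalization step and the role of the factor \(\tfrac12\) that the paper leaves as a direct verification.
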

\begin{proof}

Since \(C_{ii}=C_{ii}^\top\succeq 0\) and the set \(\Delta_i\) is nonempty, the condition in \eqref{opt: entropy} holds if and only if the following Karush-Kuhn-Tucker conditions hold for some \(v_i\in\mathbb{R}\) and \(u_i\in\mathbb{R}^{m_i}\)
\begin{equation}\label{eqn: entropy kkt}
\begin{aligned}
     &\textstyle b_i+\sum_{j=1}^n C_{ij}x_j+(\lambda+v_i) \mathbf{1}_{m_i}+\lambda\ln(x_i)-u_i=0_{m_i},\\
     &x_i^\top\mathbf{1}_{m_i}=1,\enskip x_i\geq 0, \enskip u_i\geq 0, \enskip u_i^\top x_i=0.
\end{aligned}
\end{equation}Furthermore, since the logarithm function is only defined for strictly positive numbers, we know that \eqref{opt: entropy} implies that \(x_i\) is elementwise positive. Combining this fact with \eqref{eqn: entropy kkt} gives \(u_i=0_{m_i}\). Finally, one can directly verify the equivalence between \eqref{eqn: softmax} and \eqref{eqn: entropy kkt} when \(u_i=0_{m_i}\).

\end{proof}
\begin{remark}
    If \(n=2\), \(C_{ii}=0_{m_i\times m_i}\) for \(i=1, 2\), and \(C_{12}=-C_{21}\), then Definition~1 reduces to  the quatal response equlibrium in two-player zero-sum games \cite{ling2018game}. 
\end{remark}

\subsection{Computing the quantal response equilibrium via nonlinear least-squares}

We can compute the quantal response equilibrium in Definition~\ref{def: entropy Nash} by solving the following \emph{nonlinear least-squares problem}:
\begin{equation}\label{opt: nonlinear ls}
    \begin{array}{ll}
        \underset{x}{\mbox{minimize}} &  \sum_{i=1}^n \norm{x_i-  f_i\big(-\frac{1}{\lambda}(b_i+\sum_{j=1}^n C_{ij}x_j)\big)}^2
    \end{array}
\end{equation}
where function \(f_i\) is given by \eqref{eqn: softmax}. If the optimal value of the objective function in optimization~\eqref{opt: nonlinear ls} is zero, then the corresponding solution \(x\) indeed satisfies \eqref{eqn: softmax} for all \(i\in[n]\).  

However, the question remains whether 
optimization~\eqref{opt: nonlinear ls} has an optimal value of zero, or whether it has a unique solution. We will answer these questions next. 

Throughout we will also use the following notation:
\begin{equation}\label{eqn: game parameter}
b \coloneqq \begin{bsmallmatrix}
b_1\\
b_2\\
\vdots\\
b_n
\end{bsmallmatrix}, \enskip
    C\coloneqq\begin{bsmallmatrix}
    C_{11} & C_{12}  & \hdots & C_{1n}\\
    C_{21} & C_{22}  & \hdots & C_{2n}\\
    \vdots & \vdots & \ddots & \vdots \\
    C_{n1} & C_{n2} & \hdots  & C_{nn}
    \end{bsmallmatrix}.
\end{equation}

We make the following assumptions on optimization~\eqref{opt: nonlinear ls}.
\begin{assumption}\label{asp: main}
    \(\lambda>0\), \(C+C^\top\succeq 0\), \(C_{ii}=C_{ii}^\top\) for all \(i\in[n]\).
\end{assumption}

The following proposition provides sufficient conditions under which the quantal response equilibrium in Definition~\ref{def: entropy Nash} exists and is unique.

\begin{proposition}\label{prop: unique}
If Assumption~\ref{asp: main} holds, then there exists a unique \(x= \begin{bmatrix}
x_1^\top & x_2^\top &\ldots & x_n^\top
\end{bmatrix}^\top\in\mathbb{R}^m_{++}\) such that \eqref{eqn: softmax} holds for all \(i\in[n]\). 
\end{proposition}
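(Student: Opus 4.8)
The plan is to treat existence and uniqueness separately, after recasting the equilibrium condition in variational form. By Lemma~\ref{lem: softmax}, a joint strategy $x\in\mathbb{R}^m_{++}$ satisfies \eqref{eqn: softmax} for all $i$ if and only if each block $x_i$ solves the strictly convex program \eqref{opt: entropy}. Writing the first-order optimality condition for each player over $\Delta_i$ and stacking them, I would show that any such equilibrium solves the variational inequality $F(x)^\top(y-x)\ge 0$ for all $y\in\Delta:=\Delta_1\times\cdots\times\Delta_n$, where the stacked (pseudo-)gradient operator is
\[
F(x) := b + Cx + \lambda\big(\mathbf{1}_m + \ln(x)\big),
\]
with $b,C$ as in \eqref{eqn: game parameter} and $\ln(x)$ taken elementwise; the $i$-th block of $F(x)$ is exactly the gradient in $y$ of the objective of \eqref{opt: entropy} evaluated at $y=x_i$, where the identity $\nabla_y\big(\tfrac12 y^\top C_{ii}y\big)=C_{ii}y$ uses $C_{ii}=C_{ii}^\top$.

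For existence I would avoid the variational inequality directly, since $F$ is undefined on the boundary of $\Delta$ where the logarithm blows up. Instead, consider the map $T:\Delta\to\Delta$ whose $i$-th block is $T_i(x)=f_i\big(-\tfrac1\lambda(b_i+\sum_{j=1}^n C_{ij}x_j)\big)$. Because the argument of each softmax is affine in $x$ and $f_i$ is continuous on all of $\mathbb{R}^{m_i}$, the map $T$ is a continuous self-map of the compact convex set $\Delta$, so Brouwer's fixed point theorem yields a fixed point $x^\star=T(x^\star)$, which is by definition a solution of \eqref{eqn: softmax}. Since each $f_i$ returns a strictly positive probability vector, $x^\star\in\mathbb{R}^m_{++}$ automatically, placing the equilibrium in the open orthant as claimed.

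For uniqueness I would exploit strict monotonicity of $F$ on $\mathbb{R}^m_{++}$. Decompose $F$ into the affine part $x\mapsto b+Cx$ and the entropic part $x\mapsto\lambda(\mathbf{1}_m+\ln(x))$. The affine part is monotone because $(x-x')^\top C(x-x')=\tfrac12(x-x')^\top(C+C^\top)(x-x')\ge0$ under Assumption~\ref{asp: main}, while the entropic part is strictly monotone since $\ln$ is strictly increasing, so $(\ln(x)-\ln(x'))^\top(x-x')>0$ whenever $x\ne x'$. Adding the two contributions gives $(F(x)-F(x'))^\top(x-x')>0$ for distinct $x,x'\in\mathbb{R}^m_{++}$. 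If two positive equilibria $x$ and $\hat x$ existed, each would solve the variational inequality above; testing the inequality for $x$ at $y=\hat x$ and that for $\hat x$ at $y=x$ and adding yields $(F(x)-F(\hat x))^\top(x-\hat x)\le0$, which contradicts strict monotonicity unless $x=\hat x$.

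The main obstacle I anticipate is exactly the monotonicity step: $C+C^\top$ is only assumed positive semidefinite, so the affine operator alone could fail to pin down a unique variational solution. The decisive point is that the entropy regularization supplies strict monotonicity in every direction, upgrading the merely monotone linear part to a strictly monotone operator, while the logarithmic barrier simultaneously forces every equilibrium to be interior, so that the strict-monotonicity estimate---valid only on $\mathbb{R}^m_{++}$---applies to all candidate equilibria. The remaining care is in justifying that the stacked first-order conditions indeed imply the single variational inequality over the product simplex $\Delta$, which is routine given the convexity of \eqref{opt: entropy} in each player's own block but must be written out to license testing one equilibrium's inequality at the other.
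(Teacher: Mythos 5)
Your proof is correct, but it takes a genuinely different route from the paper's. The paper outsources both halves to Rosen's classical results \cite{rosen1965existence}: for existence it observes that each player's problem \eqref{opt: entropy} is concave (using that \(C_{ii}\succeq 0\), since \(2C_{ii}\) is a principal submatrix of \(C+C^\top\succeq 0\)) and invokes Rosen's existence theorem for concave games; for uniqueness it notes that \(C+C^\top+\lambda\diag(x)^{-1}\) is positive definite at any interior point, so the game is diagonally strictly concave and Rosen's uniqueness theorem applies. You instead prove both halves from scratch: existence via Brouwer's fixed point theorem applied directly to the softmax map \(T\), which is arguably cleaner and strictly more general---it needs only continuity and compactness, not positive semidefiniteness of \(C+C^\top\)---and uniqueness via the variational-inequality plus strict-monotonicity argument, which is essentially an unpacking of the proof of Rosen's uniqueness theorem specialized to this game. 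Where the paper certifies strictness infinitesimally (positive definiteness of the pseudo-gradient Jacobian, supplied by the \(\lambda\diag(x)^{-1}\) term), you certify it integrally (elementwise strict monotonicity of \(\ln\)); both arguments hinge on the same decisive point you identify, namely that the entropy term upgrades the merely monotone linear part to a strictly monotone operator on the interior, where the softmax form of \eqref{eqn: softmax} forces every equilibrium to live. Your version is longer but self-contained; the paper's is shorter but leans on external theorems. One small point you should make explicit: invoking Lemma~\ref{lem: softmax}, and the convexity of each player's problem needed for your first-order-condition step, requires \(C_{ii}\succeq 0\), which Assumption~\ref{asp: main} delivers only indirectly through the principal-submatrix argument above.
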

\begin{proof}

We first prove the existence. Since \(C+C^\top\succ 0\), all of its principle submatrices are also positive semidefinite. Hence \(C_{ii}\succeq 0\) for all \(i\in[n]\). In addition, since \(y^\top\ln(y)\) is a convex function of \(y\), we conclude that any \(x\) that satisfies \eqref{opt: entropy} is the Nash equilibrium of a concave game, which always exists, due to \cite[Thm. 1]{rosen1965existence}.

Next, we prove the uniqueness of \(x\). The logarithm function in \eqref{opt: entropy} ensures that \(x\) is elementwise positive. By combining this fact together with Assumption~\ref{asp: main}, we can show that \(C+C^\top+\lambda \diag(x)^{-1}\)  is positive definite. Hence any \(x\) that satisfies \eqref{opt: entropy} is the Nash equilibrium of a diagonally strict concave game, which is unique \cite[Thm. 6]{rosen1965existence}.

The rest of the proof follows from the equivalence between \eqref{eqn: softmax} and \eqref{opt: entropy}, due to Lemma~\ref{lem: softmax}.

\end{proof}

In practice, different cost functions can induce the same equilibrium, even those violating Assumption~\ref{asp: main} . The cost functions satisfying Assumption~\ref{asp: main}, however, eliminate the ambiguity in the quantal response equilibrium, as shown by Proposition~\ref{prop: unique}.

%Under the assumptions in Proposition~\ref{prop: unique}, we know that the solution of optimization~\eqref{opt: nonlinear ls} is indeed unique, and this solution exactly satisfies  \eqref{eqn: softmax} for all \(i\in[n]\). 

%To solve optimization~\eqref{opt: nonlinear ls}, one can use any off-the-shelf numerical methods for nonlinear least-squares problems, such as the Gauss-Newton method and the Levenberg–Marquardt method. We refer the interested readers to \cite[Ch. 10]{nocedal1999numerical} for further details on these methods. 
\section{Numerical methods for inverse matrix games}
\label{sec: infer}

We now consider the following inverse game problem: given a desired joint strategy \(x\), how can one infer the cost matrix \(C\) that makes \(x\) the unique equilibrium in Definition~\ref{def: entropy Nash}. Here we only consider the inferring of the matrix \(C\). We note that one can seamlessly generalize the results in this section to the inference of vector \(b\).

In the following, we will introduce two different approaches for the aforementioned inverse matrix game: one based on semidefinite programs, the other based on the projected gradient method for bilevel optimization.

\subsection{Semidefinite program approach} 
\label{subsec: sdp}
We first consider the case where the desired equilibrium is a pure joint strategy, where each player \(i\) has a preferred action \(i^\star\in[m_i]\). In particular, suppose there exists \(x^\star\in\mathbb{R}^m\) and \(i^\star\in[m_i]\) for all \(i\in[n]\) such that \([x_i^\star]_k\) equals \(1\) if \(k=i^\star\), and \(0\) otherwise.

In this case, perhaps the most direct way to ensure \(x^\star\) is an equilibrium is to simply make sure that the cost of action \(i^\star\) is sufficiently lower than any alternatives for player \(i\). By combining these constraints together with the results in Proposition~\ref{prop: unique}, we obtain the following semidefinite program: 
\begin{equation}\label{opt: sdp}
    \begin{array}{ll}
        \underset{C}{\mbox{minimize}} & \frac{1}{2}\norm{C}_F^2 \\
        \mbox{subject to} & C+C^\top \succeq 0, \enskip C_{ii}=C_{ii}^\top, \enskip i\in[n],\\
        & [b_i]_{i^\star} +\sum_{j=1}^nC_{i^\star j^\star}+\varepsilon\leq [b_i]_{k} +\sum_{j=1}^nC_{kj^\star},\\
        &\forall k\in[m_i]\setminus \{i^\star\}, \enskip i\in[n].
    \end{array}
\end{equation}
where the objective function penalizes large values of the elements in matrix \(C\), and \(\varepsilon\in\mathbb{R}_+\) is a tuning parameter that separates the cost of the best action from the cost of the second best action. Intuitively, as the value of \(\varepsilon\) increases, the quantal response equilibrium in Definition~\ref{def: entropy Nash} is more likely to take a pure form.

The drawback of optimization~\eqref{opt: sdp} is that it only applies to the case where the desired equilibrium is known and close to be deterministic. If the desired equilibrium is mixed, \ie, each player has a preferred probability distribution over all actions rather than one single preferred action, then the semidefinite program is no longer useful. 

\subsection{Bilevel optimization approach}
\label{subsec: bilevel}
We now consider the case where the desired equilibrium is described by a \emph{performance function}, rather than explicitly as a desired joint strategy. In particular, we consider the following continuously differentiable function, denoted by \(\psi: \mathbb{R}^m\to\mathbb{R}\), that evaluates the quality of a joint strategy.
For example, if \(x^\star = \begin{bmatrix}
(x_1^\star)^\top & (x_2^\star)^\top & \cdots & (x_n^\star)^\top  
\end{bmatrix}^\top\) is the desired equilibrium, then a possible choice of function \(\psi\) is as follows:
\begin{equation}\label{eqn: KL div}
   \textstyle  \psi(x)=D_{KL}(x, x^\star)\coloneqq \sum_{i=1}^n x_i^\top (\ln(x_i)-\ln(x_i^\star)).
\end{equation}
The above choice of function \(\psi(x)\) measures the sum of the Kullback–Leibler (KL) divergence between each player's strategy and the corresponding desired strategy.

In order to compute the value of matrix \(C\) such that the equilibrium in Definition~\ref{def: entropy Nash} is unique and minimizes the value of performance function \(\psi(x)\), we introduce the following \emph{bilevel optimization problem}: 
\begin{equation}\label{opt: bilevel}
    \begin{array}{ll}
        \underset{x, C}{\mbox{minimize}} & \psi(x)\\
       \mbox{subject to} & C+C^\top\succeq 0,\enskip C_{ii}=C_{ii}^\top, \enskip i\in[n],\\
       &\norm{C}_F\leq \rho,\enskip \text{\(x\) is optimal for optimization~\eqref{opt: nonlinear ls}. } 
    \end{array}
\end{equation}
Here \(\rho\in\mathbb{R}_+\) is a  tuning parameter that controls the maximum allowed Frobenius norm of matrix \(C\). Intuitively, the larger the value of \(\rho\), the more choices of matrix \(C\) from which we can choose, and the more likely we can achieve a lower value of function \(\psi(x)\). 

The drawback of optimization~\eqref{opt: bilevel} is that, unlike the semidefinite program in \eqref{opt: sdp}, it is nonconvex and, as a result, one can only hope to obtain a locally optimal solution. Next, we will discuss how to compute a locally optimal solution efficiently using the projected gradient method.

\subsubsection{Differentiating through the equilibrium condition}
\label{subsec: differentiation}
The key to solve bilevel optimization~\eqref{opt: bilevel} is to compute the gradient of \(\psi(x)\) with respect to matrix \(C\). In particular, we let \(\nabla_C\psi(x)\in\mathbb{R}^{m\times m}\) be the matrix whose \(pq\)-th element, denoted by \([\nabla_C\psi(x)]_{pq}\), is given by
\begin{equation}\label{eqn: C grad}
   \textstyle [\nabla_C\psi(x)]_{pq}\coloneqq \frac{\partial \psi(x)}{\partial [C]_{pq}}
\end{equation}
for all \(p, q\in[m]\). Since function \(\psi\) is continuously differentiable, the difficulty in evaluating \(\nabla_C\psi(x)\) is to compute the Jacobian of the equilibrium \(x\) with respect to matrix \(C\). To this end, we introduce the following notation:
\begin{subequations}\label{eqn: ufu}
\begin{align}
    u\coloneqq &\textstyle -\frac{1}{\lambda}(b+Cx)\\
    f(u)\coloneqq &\begin{bmatrix}
    f_1(u_1)^\top & f_2(u_2)^\top & \cdots f_n(u_n)^\top 
    \end{bmatrix}^\top
\end{align}
\end{subequations}
where \(u_i\in\mathbb{R}^{m_i}\) for all \(i\in[n]\), and \(f_i\) is given by Lemma~\ref{lem: softmax}. The following result provides a formula to compute \(\nabla_C\psi(x)\) using the \emph{implicit function theorem} \cite{dontchev2014implicit}.

\begin{proposition}\label{prop: gradinet}
Suppose \(C+C^\top\succeq 0\) and \(\lambda>0\). Let \(x\coloneqq \begin{bmatrix}
x_1^\top & x_2^\top &\ldots & x_n^\top
\end{bmatrix}^\top\) be such that \eqref{eqn: softmax} holds for all \(i\in[n]\),
\(\psi:\mathbb{R}^{m}\to\mathbb{R}\) be a continuously differentiable function, \(u\) and \(f(u)\) given by \eqref{eqn: ufu}. If \(I_m+\frac{1}{\lambda}\partial_uf(u) C\) is nonsingular, then
\begin{equation*}
    \textstyle \nabla_C\psi(x) =   -\frac{1}{\lambda}\partial_uf(u)^\top (I_m+\frac{1}{\lambda}\partial_uf(u) C)^{-\top} \nabla_x \psi(x) x^\top.
\end{equation*}
\end{proposition}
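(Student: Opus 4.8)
The plan is to treat the equilibrium condition \eqref{eqn: softmax} as an implicit equation that defines $x$ as a function of $C$ and then combine the implicit function theorem with the chain rule. Concretely, I would first write the stacked form of \eqref{eqn: softmax} as $F(x, C)\coloneqq x - f\big(-\tfrac{1}{\lambda}(b+Cx)\big) = 0_m$, so that with $u$ and $f(u)$ as in \eqref{eqn: ufu} we have $F(x,C) = x - f(u)$. The key observation is that the Jacobian of $F$ with respect to $x$ is $\partial_x F(x,C) = I_m + \frac{1}{\lambda}\partial_u f(u) C$, since differentiating $-f(u)$ in $x$ contributes $-\partial_u f(u)\cdot(-\tfrac1\lambda C)$. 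Thus the assumed nonsingularity of $I_m + \frac{1}{\lambda}\partial_u f(u) C$ is precisely the hypothesis that activates the implicit function theorem and guarantees that $x$ is locally a differentiable function of $C$.

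Second, I would compute the entrywise derivatives $\frac{\partial x}{\partial [C]_{pq}}$. Because $u$ depends on $C$ both explicitly and through $x$, the chain rule gives $\frac{\partial u}{\partial[C]_{pq}} = -\frac{1}{\lambda}\big([x]_q e_p + C\,\frac{\partial x}{\partial[C]_{pq}}\big)$, where $e_p$ is the $p$-th standard basis vector and I used that differentiating $Cx$ in $[C]_{pq}$ produces $[x]_q e_p$. Substituting this into $\frac{\partial x}{\partial[C]_{pq}} = \partial_u f(u)\,\frac{\partial u}{\partial[C]_{pq}}$ and collecting the terms that contain $\frac{\partial x}{\partial[C]_{pq}}$ yields the linear system
\[
\Big(I_m + \tfrac{1}{\lambda}\partial_u f(u) C\Big)\frac{\partial x}{\partial[C]_{pq}} = -\tfrac{1}{\lambda}[x]_q\,\partial_u f(u)\,e_p,
\]
whose coefficient matrix is exactly $\partial_x F$. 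Inverting it gives $\frac{\partial x}{\partial[C]_{pq}} = -\frac{1}{\lambda}[x]_q\big(I_m + \tfrac1\lambda\partial_u f(u) C\big)^{-1}\partial_u f(u)\,e_p$.

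Third, I would push this through the performance function by the chain rule, $[\nabla_C\psi(x)]_{pq} = \nabla_x\psi(x)^\top \frac{\partial x}{\partial[C]_{pq}}$, which gives
\[
[\nabla_C\psi(x)]_{pq} = -\tfrac{1}{\lambda}[x]_q\, \nabla_x\psi(x)^\top\big(I_m+\tfrac1\lambda\partial_u f(u)C\big)^{-1}\partial_u f(u)\,e_p.
\]
Finally I would recognize the right-hand side as the $(p,q)$ entry of the claimed matrix: transposing this scalar shows it equals $-\frac1\lambda[x]_q\,e_p^\top\partial_u f(u)^\top(I_m+\frac1\lambda\partial_u f(u)C)^{-\top}\nabla_x\psi(x)$, where $e_p^\top(\cdot)$ selects the $p$-th row and the factor $[x]_q=[x^\top]_q$ supplies the $q$-th column of the outer product $\nabla_x\psi(x)x^\top$. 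Reassembling over all $p,q\in[m]$ then produces exactly $\nabla_C\psi(x) = -\frac1\lambda\partial_u f(u)^\top(I_m+\frac1\lambda\partial_u f(u)C)^{-\top}\nabla_x\psi(x)x^\top$.

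The differentiation itself is routine; the step I expect to require the most care is the bookkeeping in the last paragraph, namely identifying a single scalar derivative with the $(p,q)$ entry of a matrix expression while correctly tracking the transposes and confirming that the factor $[x]_q$ assembles into the outer product $\nabla_x\psi(x)x^\top$ rather than appearing elsewhere. I would guard against sign or transpose errors by verifying the $(p,q)$ entry of the proposed closed-form formula directly and matching it against the expression obtained from implicit differentiation.
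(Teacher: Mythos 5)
Your proposal is correct and follows essentially the same route as the paper's proof: both apply the implicit function theorem to $F(x,C) = x - f(u)$, identify $\partial_x F(x,C) = I_m + \tfrac{1}{\lambda}\partial_u f(u)\,C$, and finish with the chain rule, the only cosmetic difference being that you differentiate with respect to individual entries $[C]_{pq}$ while the paper differentiates with respect to whole columns $C_q$. Your final bookkeeping paragraph, matching the scalar derivative to the $(p,q)$ entry of the outer-product formula, is exactly the step the paper compresses into ``the rest of the proof is due to the chain rule,'' and it checks out.
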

\begin{proof}
Let \(F(x, C)\coloneqq x-f(u)\) and \(C_q\) denote the \(q\)-th column of matrix \(C\). Proposition~\eqref{prop: unique} implies \(x\) is the unique vector that satisfies \(F(x, C)=0_m\). Since \(f\) is a continuously differentiable function, the implicit function theorem \cite[Thm. 1B.1]{dontchev2014implicit} implies the following: if \(\partial_x F(x, C)\) is nonsingular, then \(\frac{\partial x}{\partial C_q} = -(\partial_x F(x, C))^{-1}\partial_{C_q}F(x, C)\). Using the chain rule we can show \(\partial_x F(x, C)=I_m+\frac{1}{\lambda}\partial_uf(u)C\) and \(\partial_{C_q}F(x, C)=\frac{1}{\lambda}[x]_q\partial_u f(u)\). The rest of the proof is due to the chain rule and the definition of \(\nabla_C\psi(x)\) in\eqref{eqn: C grad}. 
\end{proof}

The gradient formula in Proposition~\eqref{prop: gradinet} requires computing matrix inverse, which can be numerically unstable, In practice, we use the following formula:
\begin{equation}\label{eqn: grad}
     \textstyle \nabla_C\psi(x) =   -\frac{1}{\lambda}\partial_uf(u)^\top ((I_m+\frac{1}{\lambda}\partial_uf(u) C)^\dagger)^\top \nabla_x \psi(x) x^\top,
\end{equation}
where \(^\dagger\) denotes the Moore–Penrose pseudoinverse. Note that if \(I_m+\frac{1}{\lambda}\partial_u f(u) C\) is nonsingular, then Proposition~\ref{prop: gradinet} implies \(\hat{\nabla}_C\psi(x)=\nabla_C \psi(x)\); otherwise, the value of \(\hat{\nabla}\psi(x)\) provides only an \emph{approximation} of  \(\nabla_C \psi(x)\).  

\subsubsection{Approximate projected gradient method}
\label{subsec: proj grad}

Equipped with Proposition~\ref{prop: gradinet} and the projection formula in \eqref{eqn: grad}, we are now ready to introduce the approximate projected gradient method for bilevel optimization~\eqref{opt: bilevel}. To this end, we define the set \(\mathbb{D}\subset\mathbb{R}^{m\times m}\):
\begin{equation}\label{eqn: set D}
\begin{aligned}
     \mathbb{D}\coloneqq \{C| &C+C^\top\succeq 0, \enskip \norm{C}_F\leq \rho,\enskip C_{ii}=C_{ii}^\top, \enskip i\in[n]\}.
\end{aligned}
\end{equation}

We summarize the approximate projected gradient method
in Algorithm~\ref{alg: proj grad}, where the projection map \(\Pi_{\mathbb{D}}:\mathbb{R}^{m\times m}\to\mathbb{R}^{m\times m}\) is given by
\begin{equation}\label{eqn: proj D}
    \Pi_{\mathbb{D}}(C) = \underset{X\in\mathbb{D}}{\argmin}\enskip \norm{X-C}_F
\end{equation}
for all \(C\in\mathbb{R}^{m\times m}\). At each iteration, this method first solve the nonlinear least-squares problem in \eqref{opt: nonlinear ls}, then update matrix \(C\) using the approximate gradient in \eqref{eqn: grad}.

\begin{algorithm}[!ht]
\caption{Approximate projected gradient method. }
\begin{algorithmic}[1]
\Require Function \(\psi:\mathbb{R}^{m}\to\mathbb{R}\), vector \(b\in\mathbb{R}^m\), scalar weight \(\lambda\in\mathbb{R}_{++}\), step size \(\alpha\in\mathbb{R}_{++}\), stopping tolerance \(\epsilon\).
\State Initialize \(C=0_{m\times m}\), \(C^+=2\epsilon I_m\)
\While{\(\norm{C^+-C}_F>\epsilon\)}
\State \(C\gets C^+\)
\State Solve optimization~\eqref{opt: nonlinear ls} for \(x\).
\State \(C^+\gets \Pi_{\mathbb{D}}(C-\alpha \hat{\nabla}_C\psi(x))\)
%\Comment{See \eqref{eqn: C grad} for \((\partial_Cf(x))^\top\). }
\EndWhile
\Ensure Equilibrium \(x\) and cost matrix \(C\).
\end{algorithmic}
\label{alg: proj grad}
\end{algorithm}

A key step in Algorithm~\ref{alg: proj grad} is to compute the projection in \eqref{eqn: proj D}. The following lemma provides the explicit computational formula for computing this projection via eigenvalue decomposition and matrix normalization. 

\begin{lemma}
Let set \(\mathbb{D}\) be given by \eqref{eqn: set D} and matrix \(C\in\mathbb{R}^{m\times m}\) be partitioned as in \eqref{eqn: game parameter}, where \(C_{ij}\in\mathbb{R}^{m_i\times m_j}\), for all \(I, j\in[n]\). Let \(B=C-\blkdiag(C_{11}, \ldots, C_{nn}^\top)\), \(U\in\mathbb{R}^{m\times m}\), and \(s\in\mathbb{R}^m\) be such that \(U\diag(s)U^\top=\frac{1}{2}(C+C^\top)\). Let \(\) Then
\begin{equation}
\Pi_{\mathbb{D}}(C)=\frac{\rho}{\max\{\rho, \norm{A}_F\}} A,\label{eqn: proj set}
\end{equation}
where \(A\coloneqq \frac{1}{2}(B-B^\top)+U\diag(\max(s, 0))U^\top\).
\end{lemma}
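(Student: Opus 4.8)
The plan is to strip away the Frobenius-norm constraint first, identify the remaining feasible set as a closed convex cone, compute the projection onto that cone explicitly (which produces the matrix $A$), and then reinstate the norm ball by a single scaling step justified by the cone structure.

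First I would decompose every matrix into its symmetric and skew-symmetric parts. Writing $C_{\mathrm s}=\tfrac12(C+C^\top)$, $C_{\mathrm a}=\tfrac12(C-C^\top)$, and similarly $X=X_{\mathrm s}+X_{\mathrm a}$, these summands are orthogonal in the Frobenius inner product $\langle\cdot,\cdot\rangle$, so $\norm{X-C}_F^2=\norm{X_{\mathrm s}-C_{\mathrm s}}_F^2+\norm{X_{\mathrm a}-C_{\mathrm a}}_F^2$. I would then observe that the three constraints defining $\mathbb D$ separate across this splitting: the condition $X+X^\top\succeq 0$ is exactly $X_{\mathrm s}\succeq 0$ and constrains only $X_{\mathrm s}$; the condition $X_{ii}=X_{ii}^\top$ is equivalent to $(X_{\mathrm a})_{ii}=0$ for every $i\in[n]$ (a diagonal block of a skew-symmetric matrix is skew-symmetric, hence equals its transpose only if it vanishes) and constrains only $X_{\mathrm a}$; and the remaining coupling is solely through $\norm{X}_F^2=\norm{X_{\mathrm s}}_F^2+\norm{X_{\mathrm a}}_F^2\le\rho^2$.

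Dropping the norm constraint momentarily, the feasible set becomes the closed convex cone $\mathbb K=\{X:X_{\mathrm s}\succeq 0,\ (X_{\mathrm a})_{ii}=0,\ i\in[n]\}$, and by the orthogonal splitting the projection $\Pi_{\mathbb K}(C)$ decouples into two independent problems. Projecting $C_{\mathrm s}$ onto the positive semidefinite cone is the standard eigenvalue-clipping operation, which given $U\diag(s)U^\top=C_{\mathrm s}$ yields $U\diag(\max(s,0))U^\top$. Projecting the skew-symmetric matrix $C_{\mathrm a}$ onto the subspace of skew-symmetric matrices with zero diagonal blocks simply zeros out its diagonal blocks; a short block computation shows the result is precisely $\tfrac12(B-B^\top)$, since $B$ is $C$ with its diagonal blocks deleted. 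Adding the two pieces gives $\Pi_{\mathbb K}(C)=A$.

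Finally I would reinstate the ball $\rho\mathbb B=\{X:\norm{X}_F\le\rho\}$ and show $\Pi_{\mathbb D}(C)=\Pi_{\mathbb K\cap\rho\mathbb B}(C)=\tfrac{\rho}{\max\{\rho,\norm{A}_F\}}A$. If $\norm{A}_F\le\rho$ this is immediate, since $A\in\mathbb K\cap\rho\mathbb B$ already minimizes the distance over the larger set $\mathbb K$. The substantive case is $\norm{A}_F>\rho$, where I claim the minimizer is $A'=\tfrac{\rho}{\norm{A}_F}A$. I would verify the variational inequality $\langle C-A',Y-A'\rangle\le 0$ for all $Y\in\mathbb K\cap\rho\mathbb B$ by splitting $C-A'=(C-A)+(A-A')$ and invoking the two defining properties of projection onto a cone: the residual $C-A$ lies in the polar cone (so $\langle C-A,Y\rangle\le 0$) and is orthogonal to $A$ (so $\langle C-A,A'\rangle=0$), while $A-A'$ is a positive multiple of $A$, after which Cauchy--Schwarz together with $\norm{Y}_F\le\rho$ controls the last term. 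This cone-plus-ball scaling identity is where the real work lies; the decoupling and the two elementary projections are routine, so I expect the main obstacle to be assembling the projection optimality conditions cleanly and confirming that $\mathbb K\cap\rho\mathbb B$ has exactly the structure the scaling argument needs.
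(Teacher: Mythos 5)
Your proof is correct, and its skeleton matches the paper's: both split $C$ into symmetric and skew-symmetric parts (orthogonal under the Frobenius inner product), identify $\mathbb{D}$ as the intersection of a norm ball with the closed convex cone $\mathbb{K}$ of matrices whose symmetric part is PSD and whose skew part has zero diagonal blocks, project onto the cone by eigenvalue clipping plus zeroing diagonal blocks of the skew part, and finish by rescaling into the ball. The one genuine difference is how the final step is justified: the paper simply cites the known result that projection onto the intersection of a closed convex cone and a centered ball equals the composition $\Pi_{\mathbb{B}}\circ\Pi_{\mathbb{K}}$ (Bauschke et al., \emph{Projecting onto the intersection of a cone and a sphere/ball}, Thm.~7.1), together with textbook formulas for the product, PSD-cone, and ball projections, whereas you prove that composition identity from scratch via the variational inequality $\langle C-A', Y-A'\rangle\le 0$, using the polar-cone property $\langle C-A, Y\rangle\le 0$, the orthogonality $\langle C-A, A\rangle=0$, and Cauchy--Schwarz to cancel the terms involving $A-A'=(1-\rho/\norm{A}_F)A$. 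I checked that cancellation and it closes exactly, so your argument is a valid, self-contained replacement for the citation; what it buys is independence from the external theorem (and it makes clear that only the cone property of $\mathbb{K}$, not its specific structure, is needed), at the cost of a page of elementary work the paper avoids by reference.
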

\begin{proof}

Let \(\tr(M)\) denote the trace of matrix \(M\). First, let \(\mathbb{K}_1\coloneqq\{M\in\mathbb{R}^{n\times n}| M=M^\top\succeq 0\}\), \(\mathbb{K}_2\coloneqq\{M\in\mathbb{R}^{n\times n}| M=-M^\top, \enskip M_{ii}=M_{ii}^\top, \enskip \forall i\in[n]\}\) where \(M_{ii}\subset\mathbb{R}^{m_i\times m_i}\) denotes the \(i\)-th principal submatrix of \(M\), and \(\mathbb{B}\coloneqq \{M\in\mathbb{R}^{n\times n}|\norm{M}_F\leq \rho\}\). Then one can verify that \(\mathbb{D}=(\mathbb{K}_1+\mathbb{K}_2)\cap\mathbb{B}\) where \(+\) denotes the direct sum, and that \(\mathbb{K}_1+\mathbb{K}_2\) is a closed convex cones. By using the results on the projection onto the intersection of a ball and a closed convex cone \cite[Thm. 7.1]{bauschke2018projecting}, we can show that \(\Pi_{\mathbb{D}}(C)=\Pi_{\mathbb{B}}(\Pi_{\mathbb{K}_1+\mathbb{K}_2}(C))\).

Second, observe that \(C\in \mathbb{K}_1+\mathbb{K}_2\) if and only if \(\tilde{C}\coloneqq (\frac{1}{2}(C+C^\top), \frac{1}{2}(C-C^\top))\in \mathbb{K}_1\times \mathbb{K}_2\), where \(\times\) denotes the Cartesian product. By using the results on the projection onto the Cartesian product of sets \cite[Prop. 29.4]{bauschke2017convex}, we can show that \(\Pi_{\mathbb{K}_1\times \mathbb{K}_2} (\tilde{C})=(\Pi_{\mathbb{K}_1}(\frac{1}{2}(C+C^\top)), \Pi_{\mathbb{K}_2}(\frac{1}{2}(C-C^\top)))\). Hence \(\Pi_{\mathbb{K}_1+\mathbb{K}_2}(C)=\Pi_{\mathbb{K}_1}(\frac{1}{2}(C+C^\top))+\Pi_{\mathbb{K}_2}(\frac{1}{2}(C-C^\top))\). The rest of the proof is a direct application of the formulas for projecting onto norm balls \cite[Ex. 3.18]{bauschke2017convex} and projecting a symmetric matrix onto the positive semidefinite cone \cite[Ex. 29.32]{bauschke2017convex}.
\end{proof}

\section{Numerical examples}
\label{sec: numerical}
We demonstrate the application of the numerical methods in Section~\ref{sec: infer} using two examples. In these examples, we assume that vector \(b\) is known and \(C\) is a zero matrix before our design process. We aim to design a nonzero matrix \(C\)---which can be interpreted as subsidies and tolls---that encourages the desired behavior. We note that if the value of either vector \(b\) or matrix \(C\) is unknown before design, one can first infer the values of \(b\) and \(C\) using the approaches in Section~\ref{sec: infer}, then proceed with the results in this section.

Throughout, we compute the entropy regularized equilibrium in Definition~\ref{def: entropy Nash} by solving optimization~\eqref{opt: nonlinear ls} using the Gauss-Newton method with line search \cite[Sec. 10.3]{nocedal1999numerical}, where we terminated the algorithm when the objective function value in \eqref{opt: nonlinear ls} is less than \(10^{-10}\). We note that, depending on the problem settings, other solution algorithms may have better performance. We refer the interested readers to \cite[Ch. 10]{nocedal1999numerical} for a detailed discussion on nonlinear least-squares.

\subsection{Encouraging collision avoidance}

We consider four ground rovers placed in a two-dimensional environment, at coordinate \((0, 1)\), \((0, -1)\), \((1, 0)\), and \((-1, 0)\), respectively. Each rover wants to reach the corresponding target position with coordinates \((0, -1)\), \((0, 1)\), \((-1, 0)\) and \((1, 0)\), respectively. Each rover can choose one of three candidate paths that connect its initial position to its target position: a beeline path of length \(2\); two semicircle paths, each of approximate length \(\pi\), and one in the clockwise direction, the other one in the counterclockwise direction. We assume all rovers move at the same speed and start at the same time. 

We model the decision-making of each rover using the entropy-regularized matrix game in Section~\ref{sec: entropy matrix games}. In particular, we let \(\lambda=0.1\), \(n=4\), \(m=12\), and \(b_i=\begin{bmatrix} 2 & \pi & \pi \end{bmatrix}^\top\) for all \(i=1, 2, 3, 4\). Here the elements in \(b_i\) denote the length of each candidate path, regardless of other players' strategies; if no other player exists, then action one (beeline path) is the optimal shortest path. If \(C=0_{12\times 12}\), one can verify---by solving an instance of optimization~\eqref{opt: nonlinear ls}---that the quantal response equilibrium in Definition~\ref{def: entropy Nash}---assuming \(\lambda\) is sufficiently small---is approximately 
\begin{equation}\label{eqn: routing initial Nash}
     x_i=\begin{bmatrix}
    1 & 0 & 0 
    \end{bmatrix}, \enskip i=1, 2, 3, 4.
\end{equation}
In other words, all players tend to choose the beeline path since it has the minimum length. However, this  causes collisions among the rovers at coordinate \((0, 0)\).

By adjusting the value of matrix \(C\), we aim to change the equilibrium above to the following
\begin{equation}\label{eqn: routing final Nash}
     x_i^\star=\begin{bmatrix}
    0 & 0 & 1 
    \end{bmatrix}, \enskip i=1, 2, 3, 4.
\end{equation}
In other words, we want all players to choose the counterclockwise semicircle path. See Fig.~\ref{fig: routing} for an illustration and \url{https://www.youtube.com/watch?v=EvtPp_DWqgU} for an animation. 

We note that one can change the equilibrium from \eqref{eqn: routing initial Nash} to \eqref{eqn: routing final Nash} by simply letting \(C=0_{m\times m}\) and modifying the value of \(b\). However, such a choice of parameter implies that all rovers will voluntarily choose a longer path regardless of other rovers' strategies, which has no meaningful interpretation in path planning.

\begin{figure}[!ht]
\centering
\includegraphics[trim=0.1cm 0.1cm 0.1cm 0.1cm,width=0.3\textwidth]{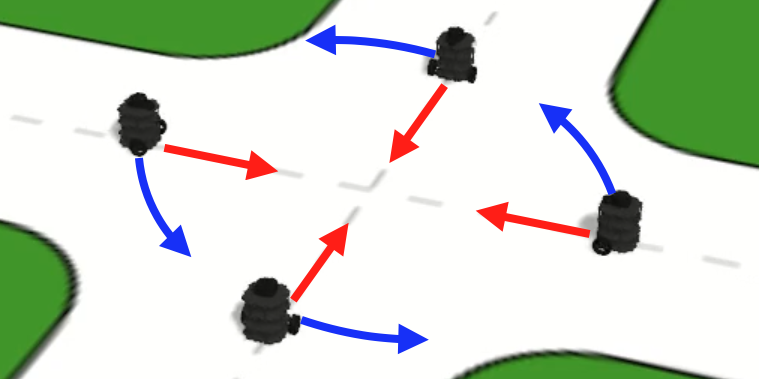}
\caption{An illustration of the equilibrium strategies in \eqref{eqn: routing initial Nash} (red arrows) and \eqref{eqn: routing final Nash} (blue arrows). }
  \label{fig: routing}
\end{figure}

Since the equilibrium in \eqref{eqn: routing final Nash} is of the form in Section~\ref{subsec: sdp}, we can compute matrix \(C\) using either the semidefinite program~\eqref{opt: sdp} or the bilevel optimization~\eqref{opt: bilevel}; in the latter case, we choose the performance function to be the KL-divergence in \eqref{eqn: KL div}. 

We solve the semidefinite program~\eqref{opt: sdp} using the off-the-shelf solver, and the bilevel optimization~\eqref{opt: bilevel} using Algorithm~\ref{alg: proj grad}. Fig.~\ref{fig: tradeoff} shows the trade-off between \(D_{KL}(x, x^\star)\)---which measures the distance between the equilibrium \(x\) that corresponds to matrix \(C\) and the desired equilibrium \(x^\star\)---and \(\norm{C}_F\) of the computed matrix \(C\) when tuning the parameter in \eqref{opt: sdp} and \eqref{opt: bilevel}. These results confirm that both the semidefinite program approach and the bilevel optimization approach apply to the cases where the desired Nash is pure and known explicitly. Furthermore, both approaches require careful tuning of algorithmic parameters to achieve a preferred trade-off between \(D_{KL}(x, x^\star)\) and \(\norm{C}_F\).

\begin{figure}[!ht]
\centering
  \begin{subfigure}{0.48\columnwidth}
  \includegraphics[trim=0.1cm 0.1cm 0.1cm 0.1cm,width=\textwidth]{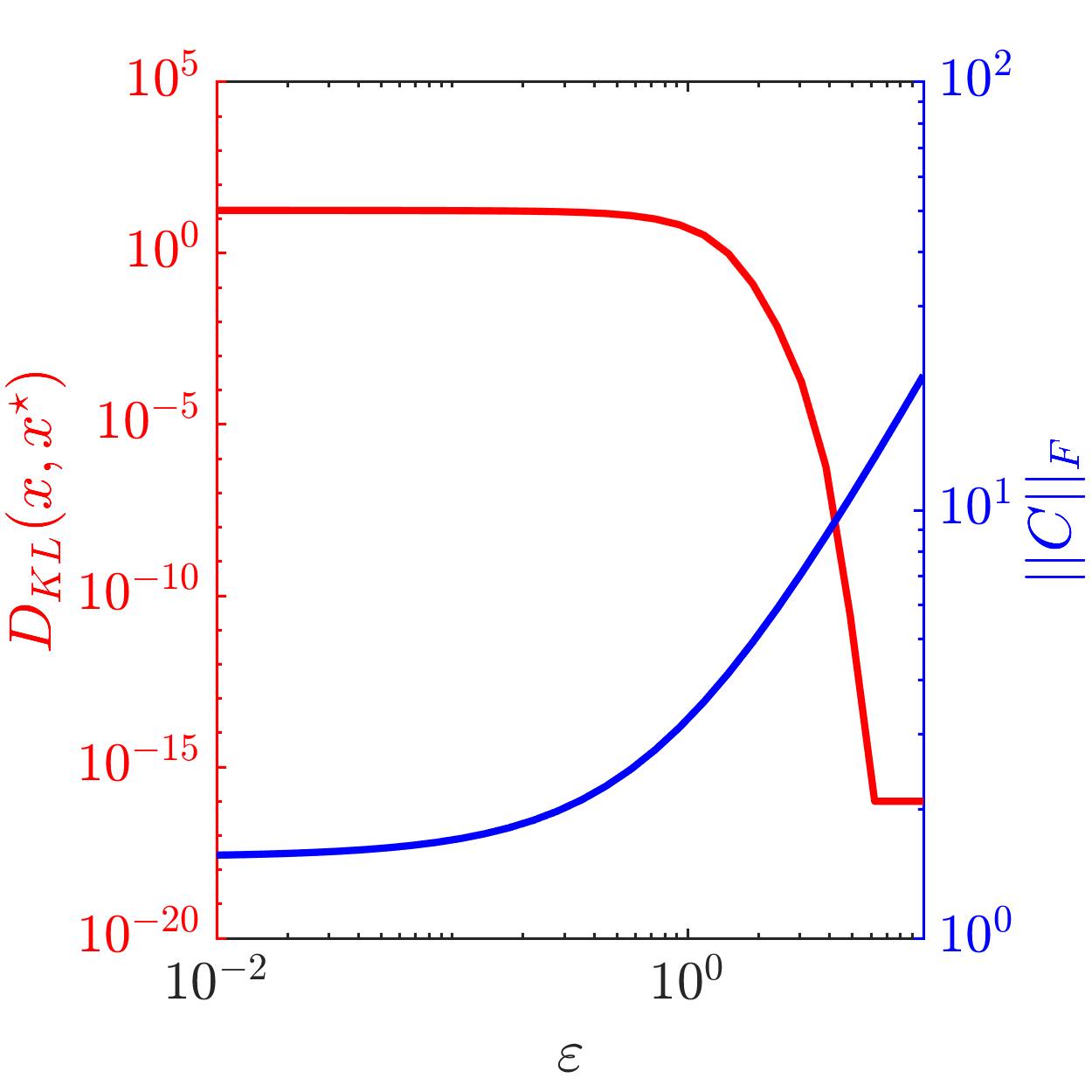}
  \caption{Semidefinite program~\eqref{opt: sdp}.}\label{fig: epsilon}
  \end{subfigure}
  \hfill
  \begin{subfigure}{0.48\columnwidth}
  \includegraphics[trim=0.1cm 0.1cm 0.1cm 0.1cm,width=\textwidth]{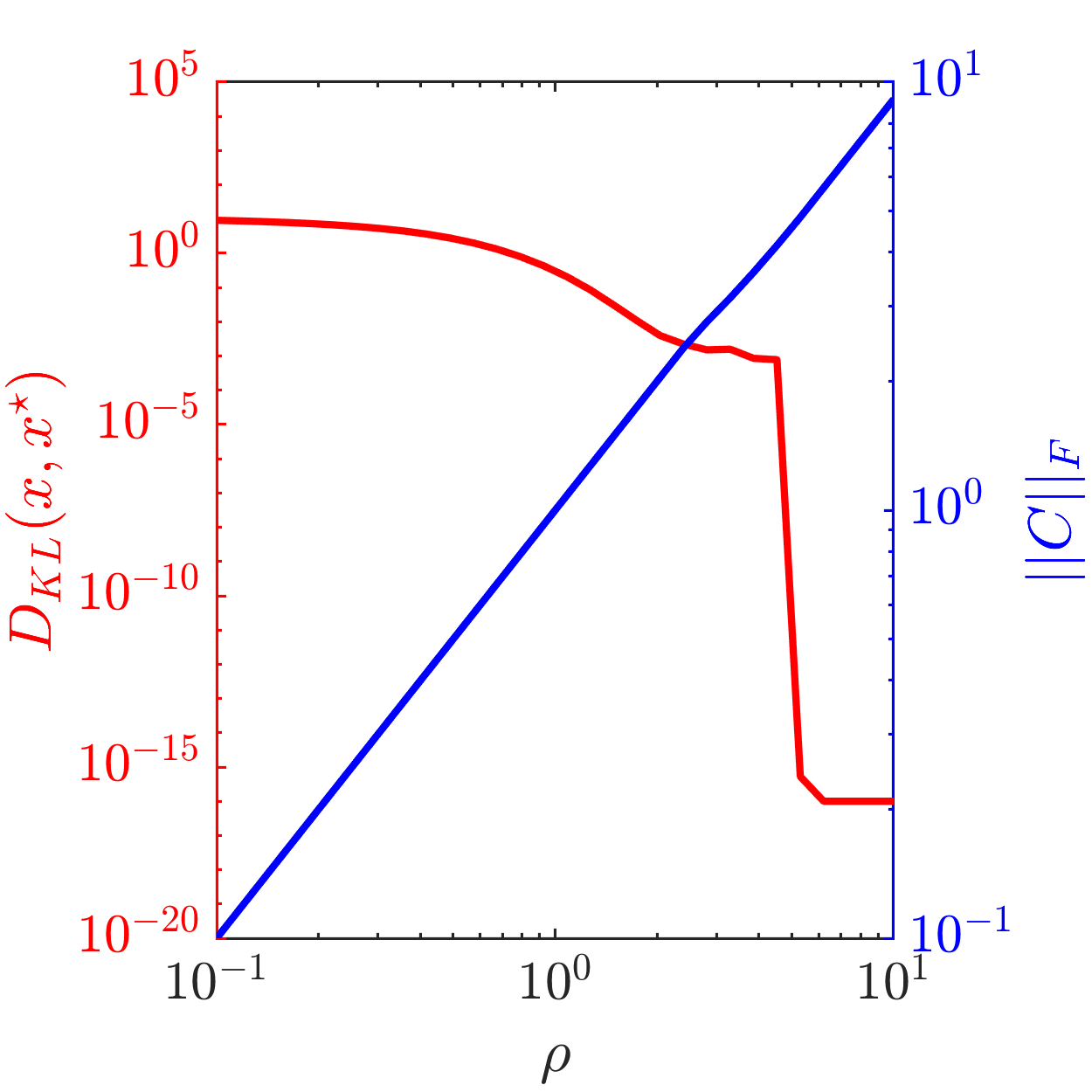}
  \caption{Bilevel optimization~\eqref{opt: bilevel}.} \label{fig: gamma}
  \end{subfigure} 
  \caption{The trade-off between \(D_{KL}(x, x^\star)\) and \(\norm{C}_F\) when tuning the parameter \(\epsilon\) in the semidefinite program~\eqref{opt: sdp} and the parameter \(\rho\) in bilevel optimization~\eqref{opt: bilevel}.  }
  \label{fig: tradeoff}
\end{figure}

\begin{comment}

We also illustrated the matrix \(C\) computed by semidefinite program~\eqref{opt: sdp} (where we set \(\epsilon=3\)) and bilevel optimization~\eqref{opt: bilevel} (where we set \(\rho=7\)). We can see that the two matrices are almost identical. These results confirm our conjecture that both approaches are well suited for inferring the cost matrix given a pure joint strategy. 

\begin{figure}[!ht]
\centering
  \begin{subfigure}{0.49\columnwidth}
  \centering
  \includegraphics[trim=0.1cm 0.1cm 0.1cm 0.1cm,width=0.8\textwidth]{figs/sdp_mat.png}
  \caption{Semidefinite program~\eqref{opt: sdp}.}\label{fig: convergence}
  \end{subfigure}
  \begin{subfigure}{0.49\columnwidth}
  \centering
  \includegraphics[trim=0.1cm 0.1cm 0.1cm 0.1cm,width=0.8\textwidth]{figs/pg_mat.png}
  \caption{Bilevel optimization~\eqref{opt: bilevel}.} \label{fig: C matrix}
  \end{subfigure} 
  \caption{An illustration of the matrix \(C\) computed using semidefinite program~\eqref{opt: sdp} and bilevel optimization~\eqref{opt: bilevel}, where each colored cell denotes an element of the matrix. }
  \label{fig: rho}
\end{figure}

\end{comment}

\subsection{Encouraging fair resource allocation}

We now consider a case where the desired equilibrium is not of the explicit form in Section~\ref{subsec: sdp}. Instead, we only have access to a performance function that implicitly describes the desired equilibrium. To this end, we consider the following three-player game. Each player is a delivery drone company that provides package-delivery service, located in the southwest, southeast, and east area of Austin, respectively. Each strategy demotes the distribution of service allocated to the nine areas of Austin; we assume all three companies have the same amount of service to allocate. For each company, within its home area (where it is located), the operating cost of delivery service is one unit; outside the home area, the operating cost increases by \(50\%\) in an area adjacent to the home area, and \(80\%\) otherwise. See Fig.~\ref{fig: austin} for an illustration\footnote{Picture credit: \url{https://en.wikipedia.org/wiki/List_of_Austin_neighborhoods}}. We model the joint decision of the three companies using the matrix game in Section~\ref{sec: entropy matrix games}, where \(n=3\), \(m_i=9\) for \(i=1, 2, 3\), and \(m=27\); we set \(\lambda=0.1\) and vector \(b\) according to the aforementioned operating cost. 

\begin{figure}[!ht]
    \centering
    \input{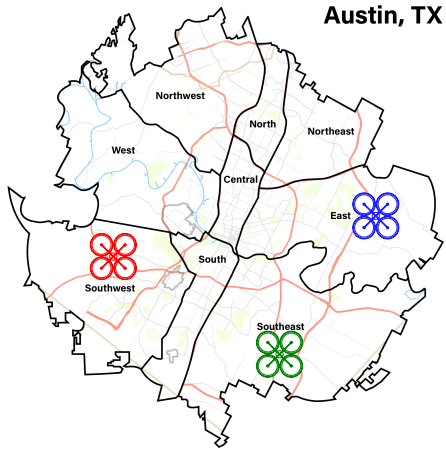}
    \caption{Three drone delivery companies (red, green, and blue) located in different areas in the city of Austin.}
    \label{fig: austin}
\end{figure}

If all companies consider only the operating cost, they will only allocate services to their respective home area. We aim to infer the value of matrix \(C\) using Algorithm~\ref{alg: proj grad} that encourages a fair allocation to other areas. In particular, we choose the performance function as follows:
\begin{equation}\label{eqn: alpha fairness}
    \psi(x) =\mathbf{1}_9^\top (x_1+x_2+x_3)^{-1}, 
\end{equation}
where vector \((x_1+x_2+x_3)^{-1}\) denotes the elementwise reciprocal of vector \(x_1+x_2+x_3\). Function \(\psi(x)\) is based on the the \emph{potential delay function} from the resource allocation literature; the latter is a special case of the more general \(\alpha\)-fairness function \cite[Sec. 2.4]{shakkottai2008network}. Here function \(\psi(x)\) measures the overall fairness of the delivery service allocation. Here we implicitly assume that the demand for delivery services is much higher than the supply, and we aim to allocate all supply. Such an assumption is common in the resource allocation literature \cite{shakkottai2008network}. The competition is among different suppliers (companies), not between supply and demand. 

We compute the cost matrix using Algorithm~\ref{alg: proj grad} and illustrate the percentages of the delivery service allocated to each area at the equilibrium in Fig.~\ref{fig: bar}. The results show that when \(\rho\approx 0\), all the drone fleets will almost only serve their respective home areas. As we increase the value of \(\rho\), the computed matrix encourages a more fair joint strategy where all nine areas receive an almost equal amount of service. 

\begin{figure}[!ht]
    \centering
    \includegraphics[trim=0.1cm 0.1cm 0.1cm 0.1cm,width=\linewidth]{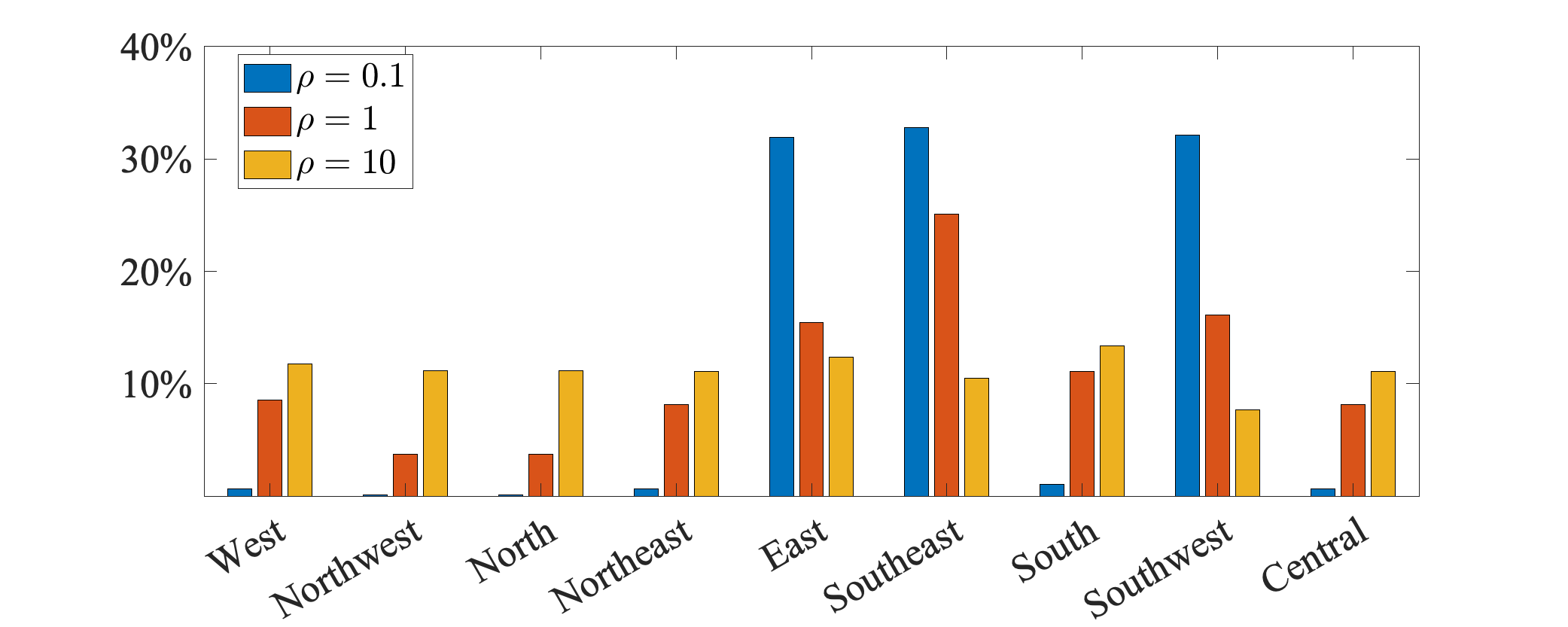}
    \caption{The percentages of the total amount of delivery service allocated to each area at the equilibrium computed by Algorithm~\ref{alg: proj grad}. }
    \label{fig: bar}
\end{figure}

\section{Conclusion}
\label{sec: conclusion}

We study the inverse game problem in the context of multiplayer matrix games. We develop efficient numerical methods to compute the cost matrices that ensure a unique quantal response equilibrium.

However, the current work only provides a preliminary proof of concept with limited applications. For example, it requires an exhaustive enumeration of all actions, which is computationally unscalable and makes the cost inference of a pure target equilibrium trivial: one can simply assign the lowest cost to the target actions. Furthermore, it only provides the cost functions that induce one desired equilibrium, rather than multiple equilibria with common desired properties. We aim to address these limitations by considering games with more complicated decision-making models. We will also consider  simultaneously optimizing multiple equilibria using robust optimization.

%\addtolength{\textheight}{-12cm}   % This command serves to balance the column lengths
                                  % on the last page of the document manually. It shortens
                                  % the textheight of the last page by a suitable amount.
                                  % This command does not take effect until the next page
                                  % so it should come on the page before the last. Make
                                  % sure that you do not shorten the textheight too much.

%%%%%%%%%%%%%%%%%%%%%%%%%%%%%%%%%%%%%%%%%%%%%%%%%%%%%%%%%%%%%%%%%%%%%%%%%%%%%%%%

%\section*{Appendix}
\section*{ACKNOWLEDGMENT} The authors would like to thank Yigit E. Bayiz, Shenghui Chen, Benjamin Grimmer, Dayou Luo, Shahriar Talebi, and the anonymous editor and reviewers for their feedback.

\bibliographystyle{IEEEtran}
\bibliography{IEEEabrv,reference}

\end{document}